\documentclass[a4paper,11pt]{amsart}

\usepackage[left=2.5cm,right=2.5cm,top=3cm,bottom=3cm]{geometry}

\usepackage{amsmath}
\usepackage{amssymb}
\usepackage{amsthm}
\usepackage{pgf,tikz}
\usepackage{graphicx}
\usepackage{setspace}
\usepackage[version=3]{mhchem}
\usepackage[sort&compress,comma,square,numbers]{natbib}
\usepackage{enumerate}

\usepackage{url}

\usepackage{mathabx}
 
\newcommand{\R}{\mathbb{R}}

\newcommand{\kk}{{\kappa}}

\newtheorem{theorem}{Theorem}

\theoremstyle{definition}


\begin{document}

  \begin{abstract}
Parametrized polynomial ordinary differential equation systems are broadly used for modeling, specially in the study of biochemical reaction networks under the assumption of mass-action kinetics. Understanding the qualitative behavior of the solutions with respect to the parameter values gives rise to complex problems within real algebraic geometry, concerning the study of the signs of multivariate polynomials over the positive orthant. In this work we provide further insight into the number of positive steady states  of a benchmark model, namely the two-site phosphorylation cycle. In particular, we provide new conditions on the reaction rate constants for the existence of one or three positive steady states, partially filling in a gap left in previous works.
\end{abstract}

\title[Reaction rate constants that enable multistationarity]{On the reaction rate constants that enable multistationarity in the two-site phosphorylation cycle}

\author[E. Feliu]{Elisenda Feliu$^{1}$}
\date{\today}

\footnotetext[1]{Department of Mathematical Sciences, University of Copenhagen, Universitetsparken 5, 2100 Copenhagen, Denmark. efeliu@math.ku.dk}

 \tikzset{every node/.style={auto}}
 \tikzset{every state/.style={rectangle, minimum size=0pt, draw=none, font=\normalsize}}
 \tikzset{bend angle=15}

\maketitle

 \section{Introduction}
Biologists use model organisms such as bacteria \textit{Escherichia coli}, the fruit fly \textit{Drosophila melanogaster}, or the mouse \textit{Mus musculus}    to gain insight about particular phenomena, with the hope that these can be taken as representatives of how other closely related organisms possibly behave. In the study of models of biochemical reaction networks, a specific reaction network has become the benchmark model where new techniques, strategies and approaches are tested. Hence this model has become the \emph{model model}, and we expect that the strategies employed to answer mathematical questions about it can be used to approach similar systems arising in molecular biology. 

The model model is a simple model of phosphorylation and dephosphorylation processes. These processes are central in the modulation of cell communication, activities and responses. For example, phosphorylation affects about $30\%$ of all  proteins in humans \cite{cohen}.  
The reaction network we consider consists of a substrate $S$ that has two phosphorylation sites. Phosphorylation occurs distributively in an ordered manner, such that one of the sites is always phosphorylated first. We denote with $S_0,S_1,S_2$ the three phosphoforms of $S$ with $0,1,2$ phosphorylated sites, respectively, and assume that a kinase $E$ and a phosphatase $F$ mediate the phosphorylation  and dephosphorylation of $S$ respectively.
This gives rise to the following reactions \cite{Wang:2008dc,conradi-mincheva}:
\begin{align}\label{eq:network}
\begin{split}
S_0 + E \ce{<=>[\kk_1][\kk_2]} ES_0 \ce{->[\kk_3]} S_1+E   \ce{<=>[\kk_7][\kk_8]} ES_1 \ce{->[\kk_9]} S_2+E \\
S_2 + F  \ce{<=>[\kk_{10}][\kk_{11}]} FS_2 \ce{->[\kk_{12}]} S_1+F  \ce{<=>[\kk_4][\kk_5]} FS_1 \ce{->[\kk_6]} S_0+F.
\end{split}
\end{align}
Under the assumption of mass-action kinetics, the evolution of the concentration of the species of the network in time is modeled by a system of autonomous ordinary differential equations (ODEs) in $\R^9_{\geq 0}$. The system consists of polynomial equations, whose coefficients are scalar multiples of one of  $12$ positive parameters $\kk_1,\dots,\kk_{12}$. Further, the dynamics are constrained to   linear invariant subspaces of dimension six, characterized  by the total amounts of kinase, phosphatase and substrate, which so enter the study as parameters.

This system is large enough for hands-on approaches to fail, but small enough to challenge the development of new mathematics. 
Further, dynamical properties of the ODE system of this network might be lifted to more complex networks related to it. For example, \eqref{eq:network} is an example of an $n$-site phosphorylation cycle \cite{Wang:2008dc,TG-Nature,FHC14}, a post-translational modification network \cite{TG-rational,fwptm,conradi-shiu-review}, a MESSI system \cite{Dickenstein-MESSI}, a network with toric steady states \cite{PerezMillan}  and, importantly, it is a building block of the MAPK cascade involved ubiquitously in  cell signaling  \cite{Huang-Ferrell,qiao:oscillations,rendall-MAPK}. 

Currently, it is known 
that the number of positive steady states within the linear invariant subspaces is either one or three, if all positive steady states are  non-degenerate \cite{Wang:2008dc,Markevich-mapk}. It has also been shown that there are choices of parameters for which there are two asymptotically stable steady states and one unstable steady state \cite{rendall-2site}, see also \cite{torres:stability}.

Some recent progress has shed some light on how these qualitative properties depend on the choice of parameters. In \cite{conradi-mincheva} the authors give two rational functions on the parameters  $\kk_1,\dots,\kk_{12}$, say $a(\kk)$ and $b(\kk)$, such that the system has one positive steady state in each invariant linear subspace if  $a(\kk)\geq 0$ and $b(\kk)\geq 0$, and has at least two in some invariant linear  subspace if $a(\kk)<0$. Furthermore in \cite{Feliu:royal,dickenstein:regions} conditions for the existence of three positive steady states involving the parameters  $\kk_1,\dots,\kk_{12}$ and some of the total amounts are given, see also \cite{conradi-erk}. Nevertheless, our understanding of the parameter region with at least two positive steady states is incomplete, and questions such as \emph{is it connected?} are still open. 

The difficulties in understanding the number of steady states (and also their stability or existence of Hopf bifurcations) arise from the high number of parameters and variables combined with the difficulties in studying polynomials over the positive real numbers. 
This is what left  the scenario $a(\kk)\geq 0$ and $b(\kk)<0$ open in \cite{conradi-mincheva}. In this work, we cover this case by showing that there are open sets in the parameters  $\kk_1,\dots,\kk_{12}$ satisfying these two inequalities both for the existence of only one positive steady state in each invariant linear subspace, and the existence of at least two in some linear invariant subspace.

\section{The parametric polynomial system}\label{sec:theproblem}
We consider the reaction network \eqref{eq:network} and 
  denote the concentrations of the species by
$x_1=[E], x_2=[F]$, $x_3=[S_0]$, $x_4=[S_1]$, $x_5=[S_2]$, $x_6=[ES_0]$, $x_7=[FS_1]$, $x_8=[ES_1]$, $x_9=[FS_2]$. Under mass-action kinetics, the ODE system modelling the concentrations of the nine species in the network \eqref{eq:network}  over time $t$  is
{\small \begin{align}
\tfrac{dx_1}{dt} &= -\kk_{1}x_{1}x_{3}-\kk_{7}x_{1}x_{4}+\kk_{2}x_{6}+\kk_{3}x_{6}+\kk_{8}x_{8}+\kk_{9}x_{8} & \tfrac{dx_6}{dt} &= \kk_{1}x_{1}x_{3}-\kk_{2}x_{6}-\kk_{3}x_{6} \nonumber
\\ 
\tfrac{dx_2}{dt} &= -\kk_{4}x_{2}x_{4}-\kk_{10}x_{2}x_{5}+\kk_{5}x_{7}+\kk_{6}x_{7}+\kk_{11}x_{9}+\kk_{12}x_{9} & \tfrac{dx_7}{dt} &= \kk_{4}x_{2}x_{4}-\kk_{5}x_{7}-\kk_{6}x_{7} \nonumber
\\ 
\tfrac{dx_3}{dt} &=-\kk_{1}x_{1}x_{3}+\kk_{2}x_{6}+\kk_{6}x_{7} & \tfrac{dx_8}{dt} &= \kk_{7}x_{1}x_{4}-\kk_{8}x_{8}-\kk_{9}x_{8}  \label{eq:ode} \\
\tfrac{dx_4}{dt} &= -\kk_{4}x_{2}x_{4}-\kk_{7}x_{1}x_{4}+\kk_{3}x_{6}+\kk_{5}x_{7}+\kk_{8}x_{8}+\kk_{12}x_{9} & \tfrac{dx_9}{dt} &= \kk_{10}x_{2}x_{5}-\kk_{11}x_{9}-\kk_{12}x_{9}.   \nonumber
\\
\tfrac{dx_5}{dt} &=-\kk_{10}x_{2}x_{5}+\kk_{9}x_{8}+\kk_{11}x_{9} \nonumber
\end {align}}%
The dynamics of this system take place in linear invariant subspaces of dimension six, defined by the equations
\begin{equation}\label{eq:cons_laws}
 x_1+ x_6 + x_8 = E_{\rm tot},\quad x_2+x_7+x_9=F_{\rm tot},\quad  x_3+x_4+x_5+x_6+x_7+x_8+x_9= S_{\rm tot},
\end{equation}
subject to $x_i\geq 0$ for $i=1,\dots,9$. Here $E_{\rm tot}, F_{\rm tot}, S_{\rm tot}$ stand for the total amounts of kinase $E$, phosphatase $F$ and substrate $S$.
The steady states of the network within one of these linear invariant subspaces are the solutions to the  system of polynomial equations given by  $\tfrac{dx_i}{dt}=0$, for $i=1,\dots,9$ in \eqref{eq:ode}, and the equations in \eqref{eq:cons_laws}. This system is called the \emph{steady state system} (note that three of the equations $\tfrac{dx_i}{dt}=0$ are redundant, say, the ones for $x_1,x_2,x_3$, and can be removed).

The steady state system depends on the variables $x_1,\dots,x_{9}$ and the parameters $\kk_1,\dots,\kk_{12}$, $E_{\rm tot}, F_{\rm tot}, S_{\rm tot}$, all of which are assumed to be positive. 
We say that a given vector of reaction rate constants  $\kk=(\kk_1,\dots,\kk_{12})$ \emph{enables} multistationarity, if there exist $E_{\rm tot}, F_{\rm tot}, S_{\rm tot}$  such that the steady state system  admits at least two positive solutions, that is, with all coordinates positive. We say then that the network is multistationary in the linear invariant subspace with total amounts  $E_{\rm tot}, F_{\rm tot}, S_{\rm tot}$.

In \cite{conradi-mincheva}, see also \cite{FeliuPlos}, conditions on the reaction rate constants for enabling, or not, multistationarity where given. 
Specifically, 
consider the Michaelis-Menten constants of each phosphorylation/dephosphorylation event:
\[K_1 = \tfrac{\kk_2+\kk_3}{\kk_1},\quad  K_2 = \tfrac{\kk_5+\kk_6}{\kk_4},\quad K_3 = \tfrac{\kk_8+\kk_9}{\kk_7},\quad K_4 = \tfrac{\kk_{11}+\kk_{12}}{\kk_{10}}.\]
The map
$\pi\colon \R^{12}_{>0}  \rightarrow  \R^{8}_{>0}$ sending $\kk=(\kk_1,\dots,\kk_{12})$ to   $\eta=(K_1,K_2,K_3,K_4,\kk_3,\kk_6,\kk_9,\kk_{12})$
is continuous and surjective. 
By letting
\[ a(\eta)= \kk_3\kk_{12}-\kk_6\kk_9, \qquad b(\eta)=(K_{{2}}+K_{{3}})\kk_{{3}}\kk_{{12}}-(K_1+K_{{4}})\kk_{{6}}\kk_{{9}},
 \]
we have that 
\begin{itemize}
\item if $a(\eta)<0$, then any $\kk\in \pi^{-1}(\eta)$ enables multistationarity;
\item if $a(\eta)\geq 0$ and $b(\eta)\geq 0$, then the steady state system has  exactly one positive solution for all $E_{\rm tot}, F_{\rm tot}, S_{\rm tot}$ (hence a $\kk\in \pi^{-1}(\eta)$ does not enable  multistationarity).
\end{itemize}
 In \cite{conradi-mincheva,FeliuPlos}  the scenario  $a(\eta)\geq 0$ and $b(\eta)< 0$ was left open.
We show now that in this case, there exist choices of reaction rate constants that enable multistationarity and choices that do not.

The inequalities above are found by combining Brouwer degree theory with some algebraic manipulations. This approach was first introduced in \cite{conradi-mincheva} for this specific system, and was further adapted to generic systems satisfying certain technical conditions in \cite{FeliuPlos}. We follow the presentation of the latter here. 
Consider the polynomial in $x_1,x_2,x_3$ with coefficients depending on $\eta$:
{\small \begin{align}\label{eq:mypolynomial}
\begin{split}
p_\eta(x)& = K_{2}^{2}K_{4}  \kk_{3}^{2}\kk_{9} ( \kk_{3}\kk_{12}-\kk_{6}\kk_{9}) x_{1}^{4}x_{3}^{2}+K_{1} K_{2}^{2}K_{4}\kk_{3}^{2}\kk_{6}\kk_{9}^{2}\, x_{1}^{4}x_{3}\\ &  + K_{1}K_{2}K_{3} \kk_{3}\kk_{6} \kk_{12}  (\kk_{3}\kk_{12}-\kk_{6}\kk_{9}) x_{1}^{3}x_{2}^{2}x_{3}
+K_{2}^{2}K_{3} \kk_{3}^{2}\kk_{12} ( \kk_{3}\kk_{12}- \kk_{6}\kk_{9}) x_{1}^{3}x_{2}x_{3}^{2} \\ &
+2K_{1}K_{2}K_{3}K_{4}\kk_{3}^{2}\kk_{6}\kk_{9}\kk_{12} \, x_{1}^{3}x_{2}x_{3} 
+ K_{1}K_{2}K_{3}\kk_{3}\kk_{6}\kk_{12} ( \kk_{3}\kk_{12}-\kk_{6}\kk_{9} ) x_{1}^{2}x_{2}^{3}x_{3} 
\\ & +K_{1}^{2}K_{2}K_{3}\kk_{3}\kk_{6}^{2} \kk_{12}( \kk_{9}+\kk_{12}) x_{1}^{2}x_{2}^{3} 
+K_{1}K_{2}K_{3}\kk_{3}\kk_{6}\kk_{12}  (\kk_{3}\kk_{12}-\kk_{6}\kk_{9}) x_{1}^{2}x_{2}^{2}x_{3}^{2}
\\ &  +K_{1}K_{2}K_{3}\kk_{3}\kk_{6}\kk_{12} ( (K_{2}+K_3)\kk_{3}\kk_{12}-(K_{1}+K_4)\kk_{6}\kk_{9}) x_{1}^{2}x_{2}^{2}x_{3}
\\ &
+K_{1}^{2}K_{2}K_{3}K_{4}\kk_{3}\kk_{6}^{2}\kk_{9}\kk_{12}\, x_{1}^{2}x_{2}^{2}
 +K_{1}^{2}K_{3}^{2}\kk_{6}^{2}\kk_{12}^{2} (\kk_{3}+\kk_{6})\,  x_{1}x_{2}^{4} 
+2\,K_{1}^{2}K_{2}K_{3}\kk_{3}\kk_{6}^{2}\kk_{12}^{2}\, x_{1}x_{2}^{3}x_{3} \\ &+K_{1}^{2}K_{2}K_{3}^{2}\kk_{3}\kk_{6}^{2}\kk_{12}^{2}\, x_{1}x_{2}^{3}
 +K_{1}^{2}K_{3}^{2}\kk_{6}^{3}\kk_{12}^{2}\, x_{2}^{4}x_{3}+K_{1}^{3}K_{3}^{2}\kk_{6}^{3}\kk_{12}^{2}\,  x_{2}^{4}.
\end{split}
\end{align}}Then 
\begin{itemize}
\item[(Mono)]  If $p_\eta(x)$ is positive for all $x_1,x_2,x_3>0$, then $\kk\in \pi^{-1}(\eta)$ does not enable multistationarity, and further there is exactly one positive steady state in each invariant linear subspace.
\item[(Mult)] If $p_\eta(x)$  is negative for some $x_1,x_2,x_3>0$, then $\kk\in \pi^{-1}(\eta)$ enables multistationarity in the invariant linear subspace containing the point
{\small \begin{align}\label{eq:parametrization}
\varphi(x_1,x_2,x_3) & = \left(x_1,x_2,x_3, \frac{K_2\kk_3 x_1x_3}{K_1\kk_6 x_2}, \frac{K_2K_4\kk_3\kk_9x_1^{2} x_3}{K_1K_3\kk_6\kk_{12}x_2^{2}},
\frac{x_1x_3}{K_1}, \frac{\kk_3x_1x_3}{K_1\kk_6},  \frac{K_2\kk_3x_1^{2}x_3}{K_1K_3\kk_6 x_2}, \frac{K_2 K_3\kk_3 \kk_9 x_1^{2} x_3}{K_1\kk_6\kk_{12}x_2}\right).
\end{align}}
\end{itemize}

The coefficients of $p_\eta(x)$ in $x=(x_1,x_2,x_3)$ are polynomials in the eight parameters $ K_1,K_2$, $K_3$, $K_4$, $\kk_3$, $\kk_6$, $\kk_9,\kk_{12}$, which either have positive coefficients, or are multiples of $a(\eta)$ or $b(\eta)$.
If $a(\eta)\geq 0$ and $b(\eta) \geq 0$, then $p_\eta(x)$ is a polynomial in $x$ with all coefficients positive, and hence becomes positive when evaluated at any positive vector $x$. By (Mono),   there is one positive steady state in each invariant linear subspace defined by the equations \eqref{eq:cons_laws}.

Now, assume $a(\eta)<0$. Then by letting $x_1=y, x_2=\tfrac{1}{y}, x_3=y$, $p_\eta(x)$ becomes a polynomial in one variable, $y$, with leading term  $K_{2}^{2}K_{4}  \kk_{3}^{2}\kk_{9} a(\eta)<0$. Hence  $p_\eta(x)$ is negative for $y>0$ large enough.  By (Mult), we conclude that  multistationarity is enabled for $\kk\in \pi^{-1}(\eta)$.

If $b(\eta)<0$ and $a(\eta)\geq 0$, then $p_\eta(x)$ has only one negative coefficient, corresponding to the monomial $x_1^2 x_2^2 x_3$. The same trick as above fails because the monomial  $x_1^2 x_2^2 x_3$ is not a vertex of the \emph{Newton polytope} of $p_\eta(x)$ (the convex hull of the exponent vectors for all monomials in $x$ of the polynomial $p_\eta(x)$). Specifically, it is well known that if $\alpha$ is a vertex of the Newton polytope $\mathcal{N}$ of a multivariate polynomial $q(x)= a x^\alpha + \dots \in \R[x_1,\dots,x_n]$, then there exist positive vectors $x$ such that the sign of $q(x)$ agrees with the sign of $a$, independently of the value of $a$. In  order to construct such a point $x$, one considers a separating hyperplane containing the vertex $\alpha$ and such that all other vertices are on the same side of the hyperplane. In particular, such an  hyperplane can take the form $\omega\cdot (x-\alpha)=0$ with $\omega$ such that $\omega\cdot \beta<\omega\cdot \alpha$ for all other vertices of $\mathcal{N}$. Then, for $x=\prod_{i=1}^n y^{\omega_i}$,   the sign of $q(x)$ is the sign of $a$ for $y>0$ large enough.

The Newton polytope of the polynomial \eqref{eq:mypolynomial} is shown in Figure~\ref{fig:newton}, seen from two different angles. The blue point on the right display is $(2,2,1)$, which corresponds to the monomial  $x_1^2 x_2^2 x_3$. Clearly, this point is not a vertex of the polytope, and hence this strategy is not informative when $b(\eta)<0$. The point $(4,0,2)$, corresponding to the monomial $x_1^4x_3^2$, is a vertex.

\begin{figure}[t]
\begin{center}
\begin{minipage}[h]{0.4\textwidth}
\includegraphics[scale=0.4]{./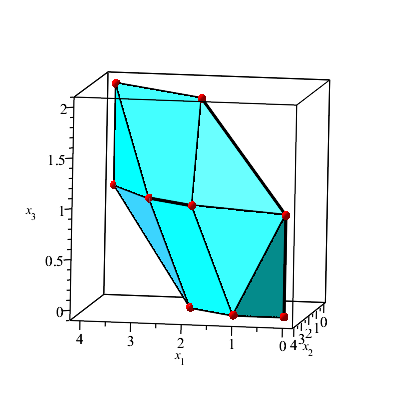}
\end{minipage}
\quad
\begin{minipage}[h]{0.4\textwidth}
\includegraphics[scale=0.4]{./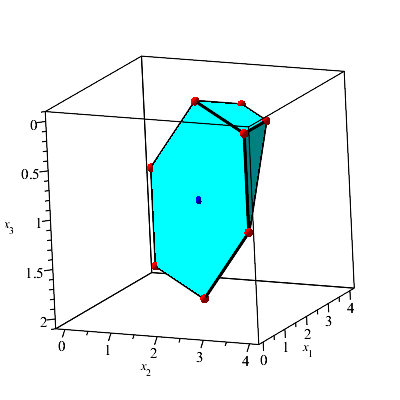}
\end{minipage}
\end{center}
\caption{{\small Newton polytope of the polynomial in \eqref{eq:mypolynomial}.}} \label{fig:newton}

\end{figure}

\section{New regions of multistationarity}
Here we show  there exists a nonempty open set $\widehat{U}$ in $\R^{12}_{>0}$ such that all $\kk\in \widehat{U}$ satisfy that $b(\eta)<0$, $a(\eta)\geq 0$ and enable multistationarity, and similarly, that there exists another nonempty open set $\widehat{V}$   in $\R^{12}_{>0}$ such that all $\kk\in \widehat{V}$ also satisfy $b(\eta)<0$ and $a(\eta)\geq 0$, but do not enable multistationarity.
The statement is a direct consequence of the following theorem, using that $\pi$ is continuous. By abuse of notation we write $a(\eta')=\kk_3\kk_{12}-\kk_6\kk_9$   for $\eta'=(K_2,K_3,K_4,\kk_3,\kk_6,\kk_9,\kk_{12})$.

\begin{theorem}\label{thm:main}
There exist two nonempty open sets $U,V\subseteq \R^8_{>0}$ such that $a(\eta)\geq 0$ and $b(\eta)<0$ for all $\eta\in U\cup V$,   and further
$p_\eta(x)$ is negative for some $x\in \R^3_{>0}$ if $\eta\in U$, and $p_\eta(x)$ is positive for all $x\in \R^3_{>0}$ if $\eta\in V$.
Furthermore, we have
\begin{enumerate}[(i)]
\item If $a(\eta)\geq 0$ and 
\[  -b(\eta)  < \max\big(2\sqrt{K_1K_4}\kk_{6}\kk_{9},2 \sqrt{a(\eta) K_2 K_3  \kk_3 \kk_{12}}\, \big) ,\]
then all $\kk\in \pi^{-1}(\eta)$ do not enable multistationarity.
\item For any choice of $\eta'=(K_2,K_3,K_4,\kk_3,\kk_6,\kk_9,\kk_{12})\in \R^7_{>0}$, multistationarity is enabled for $K_1$   larger than $y_0^3$, where $y_0$  is the largest real root of the following polynomial in $y$ (if any):
{\small 
\begin{align*}
g_{\eta'}(y) &= -K_{2}K_{3}\kk_{3}\kk_6^{2}\kk_9\kk_{12}\,  y^{3}+
\kk_6 \big( K_{2}^{2}K_{4}\kk_{3}^{2}\kk_9^{2}+K_{2}K_{3}  \kk_{3}\kk_{12}  (a(\eta'')+2  \kk_6\kk_{12})+K_{3}^{2}\kk_6^{2}\kk_{12}^{2} \big) y^{2} \\
& + \big( K_{2}^{2}K_{4}\kk_3^2 \kk_9 a(\eta')+ K_{2}K_{3}K_{4} \kk_{3}\kk_6\kk_9\kk_{12} ( 2\kk_{3} + \kk_6) +K_{2}K_{3}  \kk_3 \kk_6\kk_{12}^{2} ( \kk_{3}+\kk_6)+K_{3}^{2}\kk_6^{3}\kk_{12}^{2} \big) y\\ 
& +K_{3}\kk_{12} \big( K_{2}^{2}\kk_{3}^{2}  (a(\eta')+\kk_6\kk_{12})+ K_{3} \kk_6\kk_{12} ( K_{2}\kk_{3} + \kk_6)  (\kk_{3}+\kk_6) -K_{2}K_{4}\kk_{3}\kk_6^{2}\kk_9+K_{2} \kk_{3}\kk_6 a(\eta \big).
\end{align*}
}
Analogously, given $K_1,K_2,K_3,\kk_3,\kk_6,\kk_9,\kk_{12}$,  multistationarity is enabled for $K_4$ larger than $y_0^3$, where $y_0$ is  the largest real root of $g_{\sigma(\eta')}(y)$ for  $\sigma(\eta')=(K_3,K_2,K_1,\kk_{12},\kk_9,\kk_6,\kk_{3})$.
\end{enumerate}
\end{theorem}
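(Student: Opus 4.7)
The plan is to exploit the fact that, when $a(\eta)\geq 0$, every monomial of $p_\eta(x)$ in \eqref{eq:mypolynomial} carries a nonnegative coefficient except $x_1^2 x_2^2 x_3$, whose coefficient $K_1 K_2 K_3 \kappa_3 \kappa_6 \kappa_{12}\, b(\eta)$ is negative precisely when $b(\eta)<0$. Both parts of the theorem are then quantitative statements about how much positive mass $p_\eta$ has available to absorb (or fail to absorb) this single negative monomial.

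For part (i), I would apply weighted AM--GM to dominate the negative monomial by pairs of positive ones whose exponent vectors average to $(2,2,1)$. Two such pairs appear in \eqref{eq:mypolynomial}: $\{x_1^4 x_3,\, x_2^4 x_3\}$, with exponents $(4,0,1)$ and $(0,4,1)$ and the always-positive coefficients $K_1 K_2^2 K_4 \kappa_3^2 \kappa_6 \kappa_9^2$ and $K_1^2 K_3^2 \kappa_6^3 \kappa_{12}^2$; and $\{x_1^3 x_2 x_3^2,\, x_1 x_2^3\}$, with exponents $(3,1,2)$ and $(1,3,0)$ and coefficients $K_2^2 K_3 \kappa_3^2 \kappa_{12} a(\eta)$ and $K_1^2 K_2 K_3^2 \kappa_3 \kappa_6^2 \kappa_{12}^2$ (both positive when $a(\eta)>0$). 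For each pair, $c_1 x^{\alpha} + c_2 x^{\beta} \geq 2\sqrt{c_1 c_2}\, x_1^2 x_2^2 x_3$; substituting the explicit coefficients and comparing with $K_1 K_2 K_3 \kappa_3 \kappa_6 \kappa_{12}\, |b(\eta)|$ yields exactly the two thresholds $2\sqrt{K_1 K_4}\kappa_6 \kappa_9$ and $2\sqrt{a(\eta)K_2 K_3 \kappa_3 \kappa_{12}}$. If $-b(\eta)$ lies strictly below the larger of the two, the negative monomial is dominated by the corresponding positive pair while all remaining monomials stay nonnegative, so $p_\eta(x)>0$ on $\R^3_{>0}$ and (Mono) gives nonmultistationarity.

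For part (ii), I would construct a one-parameter curve $x=\varphi(K_1,y)$ in $\R^3_{>0}$ with $x_1^2 x_2^2 x_3 = y^3$ along it, such that after multiplying by a suitable power of $K_1$ the polynomial $p_\eta(\varphi(K_1,y))$ tends to $g_{\eta'}(y)$ as $K_1 \to \infty$. The match between the $y^3$-coefficient of $g_{\eta'}$ and the $K_1^2$-part of the coefficient of $x_1^2 x_2^2 x_3$ in $p_\eta$ (both equal to $-K_2 K_3 \kappa_3 \kappa_6^2 \kappa_9 \kappa_{12}$) is the structural hint that such a $\varphi$ exists. Since $g_{\eta'}$ has negative leading coefficient in $y$ and manifestly positive $y^2$-coefficient (using $a(\eta)\geq 0$), it admits a largest real root $y_0$ whenever it changes sign on $\R$. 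Choosing $y=K_1^{1/3}$ and assuming $K_1>y_0^3$ forces $y>y_0$, hence $g_{\eta'}(y)<0$, hence $p_\eta$ is negative at $\varphi(K_1,K_1^{1/3})$ for $K_1$ in this range, and (Mult) delivers multistationarity. The $K_4$-analogue follows from the involution of \eqref{eq:network} swapping the kinase and phosphatase branches, whose action on $\eta'$ is precisely $\sigma$.

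The routine step is the AM--GM bookkeeping of part (i); the main obstacle is identifying the correct $\varphi$ in part (ii). Because $(2,2,1)$ lies in the interior of the Newton polytope of $p_\eta$, no substitution of the form $x_i = y^{\omega_i}$ alone can make the negative monomial dominant, so $\varphi$ must be $K_1$-dependent and fine-tuned so that in the large-$K_1$ limit the several positive $K_1^{\geq 2}$ contributions (most prominently the $K_1^3$ coming from $x_2^4$ and the $K_1^2$ coming from $x_1^4 x_3$) together with the negative $K_1^2$ contribution from $x_1^2 x_2^2 x_3$ reorganize exactly into the four coefficients of $g_{\eta'}(y)$, with all remaining terms suppressed by negative powers of $K_1$. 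Guessing $\varphi$ and then verifying this asymptotic expansion term by term is the technical heart of the proof.
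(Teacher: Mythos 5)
Your part (i) is correct and is in substance the paper's own argument: the two monomial pairs you feed into AM--GM, $\{x_1^4x_3,\,x_2^4x_3\}$ and $\{x_1^3x_2x_3^2,\,x_1x_2^3\}$, are exactly the positive terms of the two three-term quadratics $K_1\kk_6x_3\,\alpha_\eta(x)$ and $K_2K_3\kk_3\kk_{12}x_1x_2\,\alpha_\eta(x)$ that the paper isolates and handles by completing the square, so the thresholds $2\sqrt{K_1K_4}\kk_6\kk_9$ and $2\sqrt{a(\eta)K_2K_3\kk_3\kk_{12}}$ come out identically.

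In part (ii), however, there is a genuine gap: the entire content of the statement is the existence of the substitution you call $\varphi$, and you do not construct it -- you only argue that it \emph{should} exist and defer the ``asymptotic expansion term by term'' to future verification. The paper's resolution is concrete and exact, not asymptotic: regard $p_\eta$ as a polynomial in the \emph{four} variables $K_1,x_1,x_2,x_3$. Then the negative monomial becomes $K_1^2x_1^2x_2^2x_3$ with exponent vector $(2,2,2,1)$, which, unlike $(2,2,1)$ in $\R[x_1,x_2,x_3]$, \emph{is} a vertex of the corresponding Newton polytope; the separating hyperplane with normal $\omega=(3,1,0,2)$ then dictates the substitution $K_1=y^3$, $x_1=y$, $x_2=1$, $x_3=y^2$, under which $p_\eta(x)$ equals $y^7g_{\eta'}(y)$ \emph{identically}, so no limit $K_1\to\infty$ and no error control are needed. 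Your remark that ``no substitution of the form $x_i=y^{\omega_i}$ alone can work, so $\varphi$ must be $K_1$-dependent'' is the right instinct, but without the vertex observation in the enlarged variable set and the explicit exponent vector, the claim that the positive $K_1^{\ge 2}$ contributions ``reorganize exactly into the four coefficients of $g_{\eta'}$'' is unproven, and it is precisely the step the theorem asks for. Two smaller omissions: you never produce the open sets $U,V$ of the first assertion (the paper gets $U$ from continuity of the roots of $g_{\eta'}$ in $\eta'$, and $V$ from the strict inequalities in (i), checking nonemptiness with the explicit point $\eta=(2,0.5,0.5,2,2,1,1,1)$), nor do you verify that $b(\eta)<0$ on $U$ (which follows because $p_\eta<0$ somewhere together with $a(\eta)\ge 0$ forces $b(\eta)<0$, as $a$ does not depend on $K_1$).
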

 \begin{proof}
Viewed as a polynomial in $K_1,x_1,x_2,x_3$,  $p_\eta(x)$ has the following  monomials:
{\small \begin{align*}
K_{1}^{2} x_{2}^{4}x_{3},\ K_{1}^{2}x_{1}^{2}x_{2}^{2},\ K_{1}^{2} x_{1}x_{2}^{3}x_{3},\ K_{1}^{2}x_{1}^{2}x_{2}^{2}x_{3},\ K_{1}x_{1}^{3}x_{2}^{2}x_{3},\ K_{1}x_{1}^{2}x_{2}^{2}x_{3}^{2},\ K_{1}x_{1}^{2}x_{2}^{2}x_{3}, \\ \ K_{1}x_{1}^{2}x_{2}^{3}x_{3},  \ K_{1} x_{1}^{3}x_{2}x_{3},\ K_{1}^{3} x_{2}^{4},\ K_{1}x_{1}^{4}x_{3},\ K_{1}^{2}x_{1}^{2}x_{2}^{3},\ K_{1}^{2} x_{1}x_{2}^{3},\ x_{1}^{4} x_{3}^{2},\ x_{1}^{3}x_{2} x_{3}^{2} ,\ K_{1}^{2}x_{1}x_{2}^{4}.
\end{align*}}%
The coefficient  of $K_1^2  x_1^2 x_2^2 x_3$ is $- K_2K_3 \kk_3\kk_6^2\kk_9\kk_{12}$, which is negative. The exponent vector $(2,2,2,1)$ of this monomial is a vertex of the Newton polytope associated with the polynomial $p_\eta(x)$ in $\R[K_1,x_1,x_2,x_3]$. A separating hyperplane is defined by the vector
$\omega = (3,1,0,2)$. Following the procedure explained at the end of Section~\ref{sec:theproblem}, we consider now 
$K_1= y^{3}$, $x_1=y$, $x_2=1$, $x_3=y^2$. By letting $\eta'=(K_2,K_3,K_4,\kk_3,\kk_6,\kk_9,\kk_{12})$,  $p_\eta(x)$ becomes $y^{7}g_{\eta'}(y)$.
Since the leading term of $g_{\eta'} (y)$  is negative,  $g_{\eta'} (y)$ is negative for some value of $y$,  and this is guaranteed at least if $y$ is large enough. There could be other values of $y$ for which $g_{\eta'}(y)$ is negative, but this depends on whether  $g_{\eta'} (y)$ has two, one or zero positive real roots.
Any $y$ for which  $g_{\eta'}(y)<0$  defines a value of $K_1$ and $x\in \R^3_{>0}$ such that $p_\eta(x)$  is negative.  
Specifically, consider any $y$ larger than the largest real root of $g_{\eta'} (y)$, if any. Since $K_1 =y^{3}$, increasing $y$ implies making   $K_1$ large.
 In this case necessarily $b(\eta)<0$, since $a(\eta)\geq 0$ by assumption and $a(\eta)$ does not depend on $K_1$. 
 
Let $U'$ be the open set of parameters $K_2,K_3,K_4,\kk_3,\kk_6,\kk_9,\kk_{12}$ in  $\R^7_{>0}$ such that $a(\eta)> 0$, and consider the   polynomial $g_{\eta'}(y)$ for all $\eta'\in U'$. By continuity of the roots of a polynomial, there exists a continuous and positive function $\epsilon(\eta')$ defined on $U'$ such that if  $y>\epsilon(\eta')$, then $g_{\eta'}(y)$ is negative and $b(\eta)<0$ with $K_1=y^3$. 
 It follows that for all $\eta$ in the open set
\[ U=\big\{ (y^3,  K_2,K_3,K_4,\kk_3,\kk_6,\kk_9,\kk_{12}) \mid \kk_3\kk_{12} - \kk_6\kk_9>0, y>\epsilon(\eta')\big\},\] $p_\eta(x)$ is negative for some $x\in \R^3_{>0}$. 
This shows the first part of the theorem and statement (ii) for $K_1$. Statement (ii) on $K_4$ follows analogously 
   because the system is symmetric under the change of variables  sending 
$x\in \R^9$ to $(x_2,x_1,x_5,x_4,x_3,x_9,x_8,x_7,x_6)$
and  the parameter $\kk$ to $(\kk_{10},$ $ \kk_{11},\kk_{12},\kk_7,\kk_8,\kk_9,\kk_4,\kk_5,\kk_6,\kk_1,\kk_2,\kk_3)$. 

\medskip
To show the second part, 
we identify a square polynomial that involves $b(\eta)$. Specifically, 
  the polynomial $p_\eta(x)$, seen as a polynomial in $x$,  has the  summand $K_{1} \kk_{6}x_3 \alpha_\eta(x)$ with
{\small \[ \alpha_\eta(x) = K_{2}^{2}K_{4}\kk_{3}^{2}\kk_{9}^{2}x_{1}^{4} +K_{2}K_{3} \kk_{3}\kk_{12}b(\eta) x_{1}^{2}x_{2}^{2} +K_1K_{3}^{2}\kk_{6}^{2}\kk_{12}^{2}x_{2}^{4}.\]}%
The following polynomial  is positive for all real values of $\eta$ and $x$:
{\small \begin{multline*}
\big( \sqrt{K_4}K_{2}\kk_{3}\kk_{9}x_{1}^{2} -\sqrt{K_1}K_{3}\kk_{6}\kk_{12}x_{2}^{2}\big)^2   =K_{2}^{2}K_{4}\kk_{3}^{2}\kk_{9}^{2}x_{1}^{4}   -  2\sqrt{K_1K_4}K_{2}K_{3}\kk_{3} \kk_{6}\kk_{9}\kk_{12}x_{1}^{2}x_{2}^{2}   +  K_1K_{3}^{2}\kk_{6}^{2}\kk_{12}^{2}x_{2}^{4}.
\end{multline*}}%
If $ K_{2}K_{3} \kk_{3}\kk_{12}\, b(\eta)>-2\sqrt{K_1K_4}K_{2}K_{3}\kk_{3} \kk_{6}\kk_{9}\kk_{12}$, 
or equivalently,
\begin{equation}\label{eq:myeq1}
-b(\eta) < 2\sqrt{K_1K_4} \kk_{6}\kk_{9},
\end{equation}
then $\alpha_\eta(x)>  \big( \sqrt{K_4}K_{2}\kk_{3}\kk_{9}x_{1}^{2} -\sqrt{K_1}K_{3}\kk_{6}\kk_{12}x_{2}^{2}\big)^2>0$, and hence $p_\eta(x)>0$ for all $x\in \R^3_{>0}$.
Similarly,  $ K_2K_3\kk_3\kk_{12} x_1x_2 \alpha_\eta(x)$ with
{\small
 \[ \alpha_\eta(x) =K_1^2 K_3 \kk_6^2\kk_{12} x_2^2  +    K_1\kk_6 b(\eta) x_1 x_2 x_3 + a(\eta) K_2 \kk_3 x_1^2 x_3^2,\]}%
 is a summand of $p_\eta(x)$. By expanding 
$ ( K_1\kk_6 \sqrt{K_3 \kk_{12}}x_2  - \sqrt{a(\eta) K_2 \kk_3} x_1 x_3)^2$ we conclude as above that $p_\eta(x)>0$ for all $x\in \R^3_{>0}$ if 
$ K_1\kk_6 b(\eta) > -2  K_1 \kk_6 \sqrt{a(\eta) K_2K_3 \kk_3\kk_{12}}$
or equivalently
\begin{equation}\label{eq:myeq2}
- b(\eta) < 2 \sqrt{a(\eta) K_2 K_3  \kk_3 \kk_{12}}.
\end{equation}
By combining \eqref{eq:myeq1} and \eqref{eq:myeq2} we obtain statement (i).
Consider the open set $V$ consisting of $\eta\in \R^8_{>0}$ such that $a(\eta)>0$ and \eqref{eq:myeq1} or \eqref{eq:myeq2} hold.    Then  $V$ satisfies the assumptions in the statement since it is nonempty as $\eta=(2,0.5,0.5,2,2,1,1,1)\in V$.
This concludes the proof.
\end{proof}
 
 The statement  of the theorem gives us a way to construct sets of reaction rate constants such that $a(\eta)\geq 0$ and $b(\eta)<0$ that enable multistationarity and sets that do not enable multistationarity. 
For example, consider $\eta=(2,0.5,0.5,2,2,1,1,1)$, which belongs to case (i) of Theorem~\ref{thm:main}. Then $\kk=(  2, 2, 2 , 0.25, 0.25, 1 , 0.25, 0.25 , 1 , 1 ,1  , 1)$ is such that $\pi(\kk)=\eta$ and hence it does not enable multistationarity.

In order to construct a set that enables multistationarity, we consider any choice of $K_2,K_3,K_4$, $ \kk_3,\kk_6,\kk_9,\kk_{12}$ and consider the polynomial 
$g_{\eta'}(y)$. Letting $K_1=y^3$, we are guaranteed that for $y$ large enough, $p_\eta(y,1,y^2)$ is negative. The value of $K_1$ found in this way gives a choice of reaction rate constants that enable multistationarity. For example, let 
 $(K_2,K_3,K_4,\kk_3,\kk_6,\kk_9,\kk_{12})= (1,1,1,2,1,1,1)$, such that $a(\eta)>0$. Then 
\[g_{\eta'}(y)=  -2 y^3 +11 y^2 + 21y + 17. \]
By letting $y =8$ (or any $y$ larger than $\sim\,7.14$), $p_\eta(x)<0$ for $K_1=y^3=512$ and $x=(y,1,y^2)=(8,1,64)$. 
The choice $\kk=(1, 510, 2, 2,1,1,2,1,1,2,1,1)$ enables multistationarity, since  $\pi(\kk)=\eta$. 
In order to find a linear invariant subspace with multiple steady states, we compute
$x^*=\varphi(8,1,64)=(8,1,64,2,16,1,2,16,16)$ in \eqref{eq:parametrization}, which belongs to the linear invariant  
 subspace defined by $E_{\rm tot } = 25$, $F_{\rm tot}=19$, $S_{\rm tot}= 117$. We solve the equations for the positive steady states in this linear invariant subspace, and obtain $x^*$ together 
with   two other positive steady states, given approximately by: 
\begin{align*} 
(11.2, 3.4, 82.9, 1.1, 3.6, 1.8, 3.6, 12, 12), \ (7.1, 0.5, 44.3, 2.4, 33.9, 0.6, 1.2, 17.3, 17.3)
 \end{align*}
(and there are two other solutions, with negative components).

For $(K_2,K_3, \kk_3,\kk_6,\kk_9,\kk_{12})=(1,1,2,1,1,1)$,   Figure~\ref{fig:regions} shows  in the $(K_1,K_4)$-plane  the region 
where multistationarity is enabled by Theorem~\ref{thm:main}(ii) (in yellow), and the regions where it is not enabled from Theorem~\ref{thm:main}(i) (in blue).
We show as well the region where multistationarity is not enabled due to $b(\eta)>0$, which conforms a small portion of the blue region.

\begin{figure}[t]

\begin{tikzpicture}
\draw[fill=orange!10!white,draw=orange] (-2,5.5) rectangle (14.5,-2.5);
\node[draw=black,fill=white] (a) at (0,3.5) {
\includegraphics[scale=0.18]{./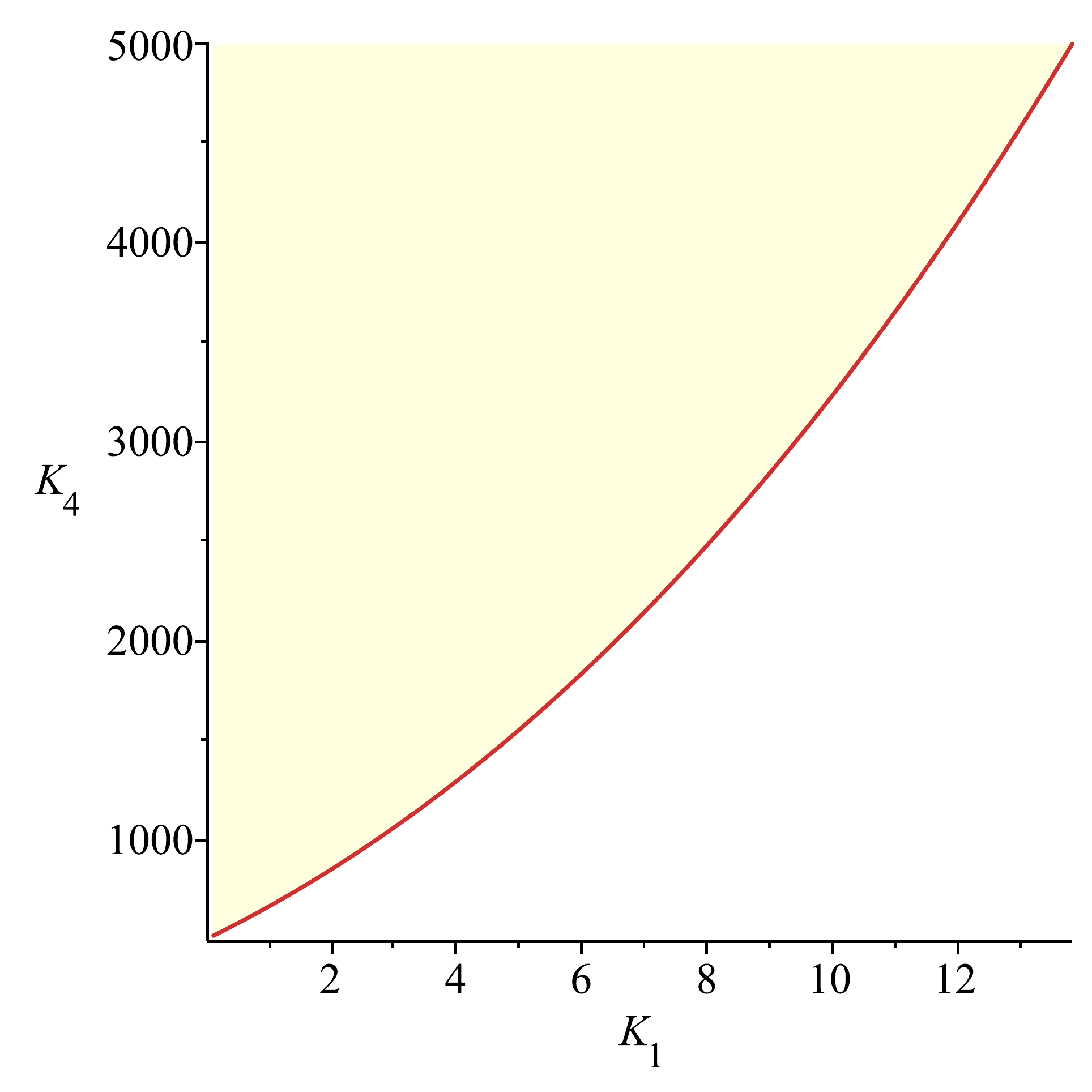}};
\node[draw=black,fill=white] (a) at (0,-0.5) {
\includegraphics[scale=0.18]{./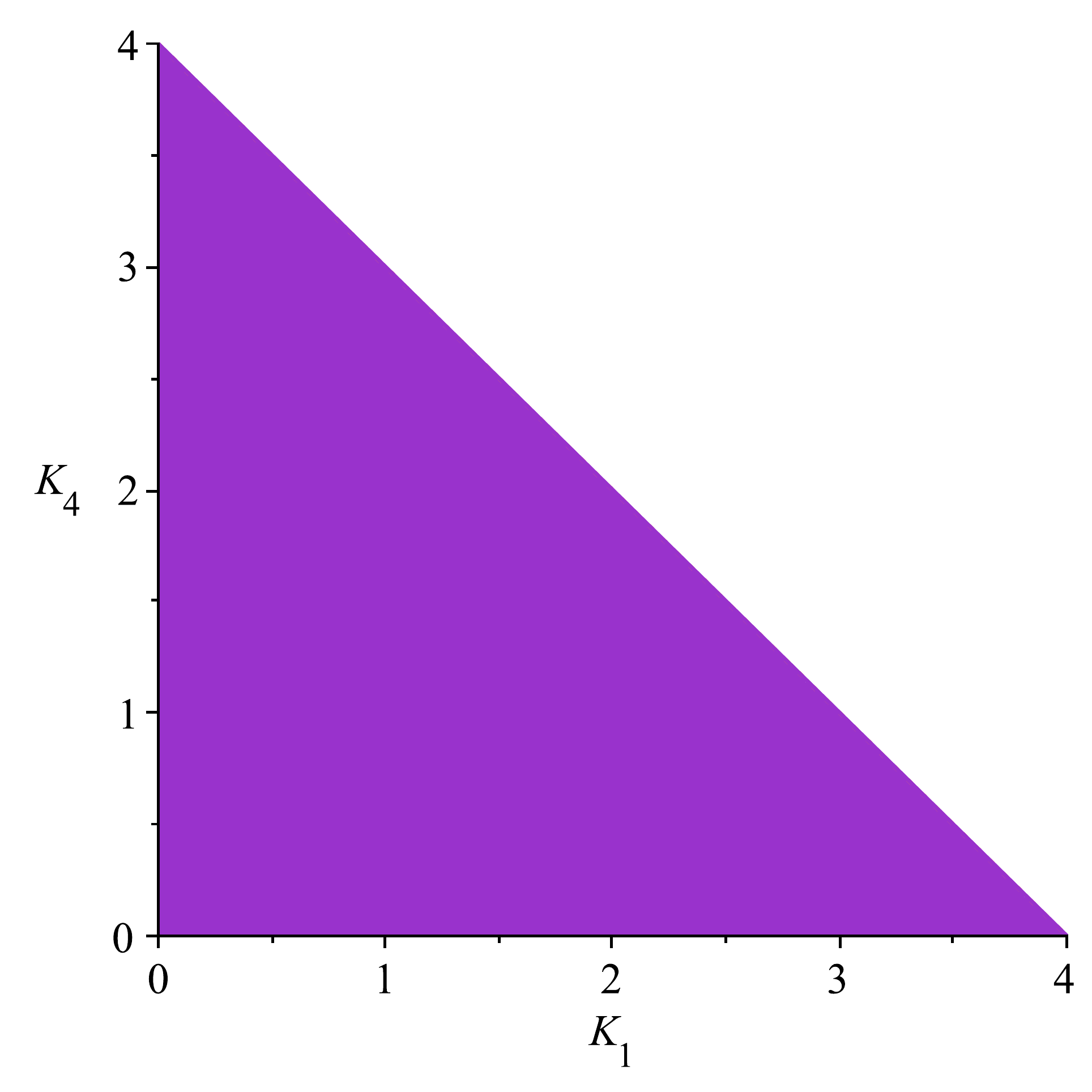}};
\node[draw=black,fill=white] (a) at (6,1.5) {
\includegraphics[scale=0.3]{./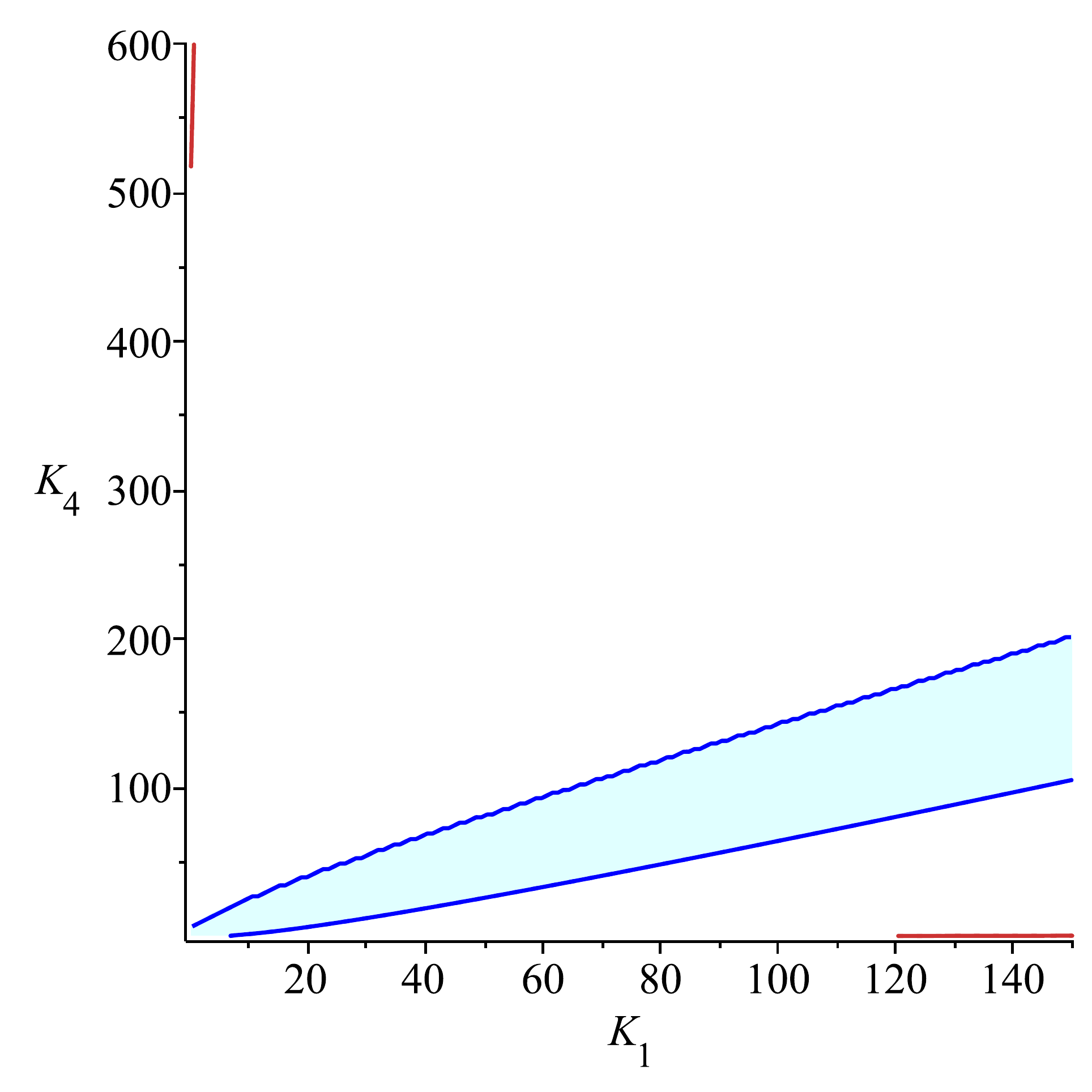}};
\node[draw=black,fill=white] (a) at (12.5,0) {
\includegraphics[scale=0.18]{./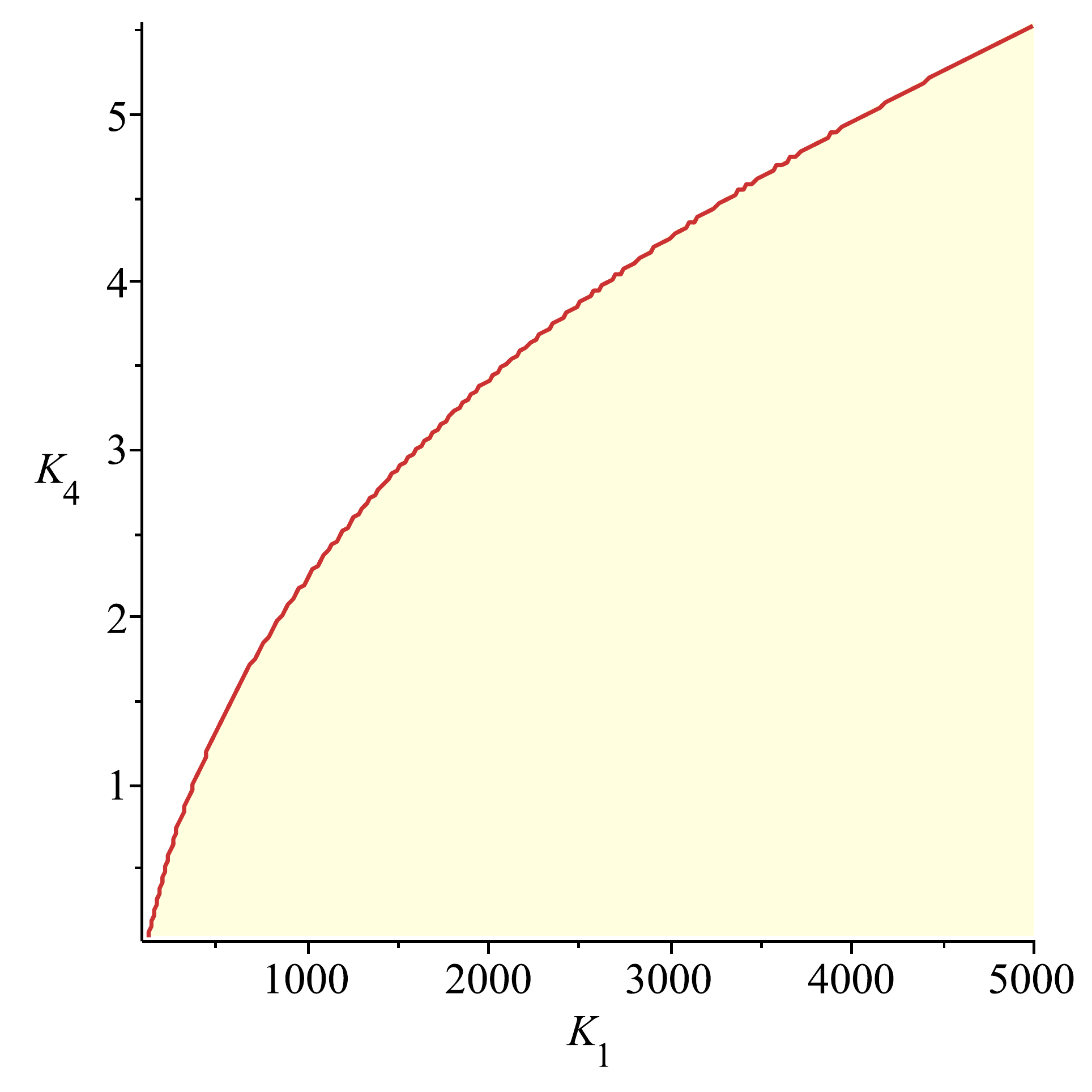}};
\draw[dashed] (4.1,3.8) ellipse (.35 and 0.75);
\draw[fill=gray!20!white,shade,shading angle=90,left color=gray!70!white,right color=white,draw=white] (3.7,3.8) -- (1.9,5.4) -- (1.9,1.6);
\draw[dashed] (8.3,-0.7) ellipse (.75 and 0.35);
\draw[fill=gray!20!white,shade,shading angle=90,right color=gray!70!white,left color=white,draw=white] (9.1,-0.7) -- (10.61,1.95) -- (10.61,-1.91);
\draw[dashed] (4.2,-0.6) ellipse (.5 and 0.4);
\draw[fill=gray!20!white,shade,shading angle=90,left color=gray!70!white,right color=white,draw=white] (3.7,-0.6) -- (1.9,1.4) -- (1.9,-2.4);
\end{tikzpicture}

\caption{{\small With $(K_2,K_3, \kk_3,\kk_6,\kk_9,\kk_{12})=(1,1,2,1,1,1)$, the figure shows in blue the values of  $K_1,K_4$ that do not enable multistationarity according to Theorem~\ref{thm:main}(i), and in yellow the region enabling multistationarity according to Theorem~\ref{thm:main}(ii). Two close ups of the multistationarity regions are given. In purple we show the region known from \cite{conradi-mincheva}.}}\label{fig:regions}
\end{figure}

Figure~\ref{fig:regions} illustrates that  Theorem~\ref{thm:main} leaves large regions undetermined. In order to inspect this more closely for this choice of parameters, we have sampled uniformly values of $K_1,K_4$. For each point, we have tested whether the polynomial $p_{\eta}(x)$ is negative for some positive $x$, see Figure~\ref{fig:sampling}. We have used the package RegularChains of Maple 2018 to this end. 
Points that do not enable multistationarity, that is, such that $p_\eta(x)>0$ for all $x\in \R^3_{>0}$, are shown in blue, while points that enable multistationarity, that is, $p_\eta(x)<0$ for some $x\in \R^3_{>0}$,  are shown in orange. The solid orange line shows the boundary of the multistationary region in Theorem~\ref{thm:main}(ii). The figure suggests that we should expect the region of multistationarity in the $(K_1,K_4)$-plane to be concentrated around the $K_1$- and $K_4$-axes after enlarging the region given in  Theorem~\ref{thm:main}(ii).

The green line in  Figure~\ref{fig:sampling} arises from a different choice of separating hyperplane in the proof of Theorem~\ref{thm:main}(ii). In the proof, we consider the vector $\omega=(3,1,0,2)$. The vector 
$\omega = (5,2,0,3)$ defines also a separating hyperplane, which in turn gives rise to a new function $g_{\eta'}(x)$. The figure illustrates that, the best choice of separating hyperplane depends on the values of $K_1$ and $K_4$. In general, there are infinitely many separating hyperplanes that can be explored. 

\begin{figure}[t]
\begin{tikzpicture}
\draw[fill=orange!10!white,draw=orange] (-3,2.9) rectangle (14,-2.9);
\node[draw=black,fill=white] (a) at (0,0) {
\includegraphics[scale=0.25]{./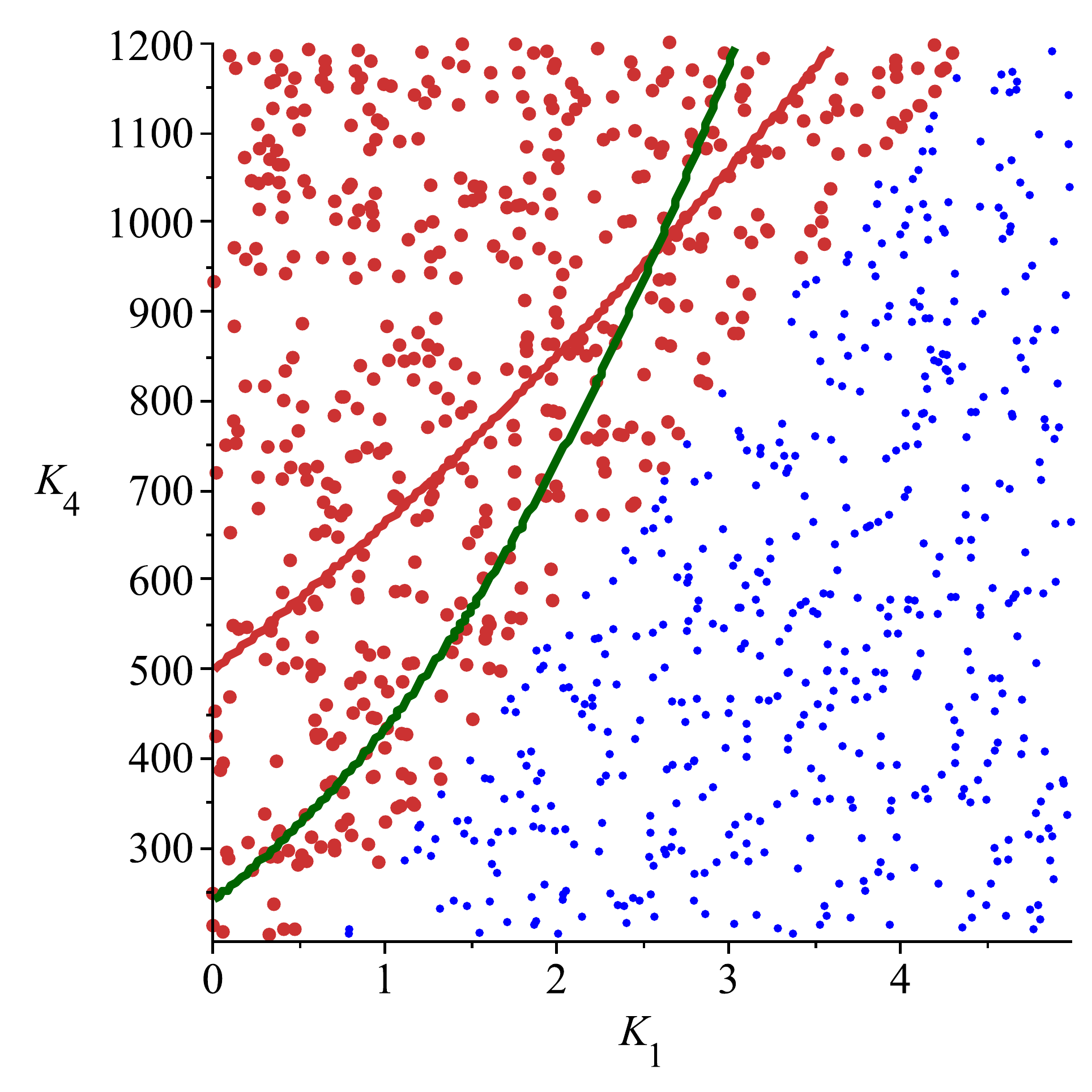}};
\node[draw=black,fill=white] (a) at (5.5,0) {
\includegraphics[scale=0.25]{./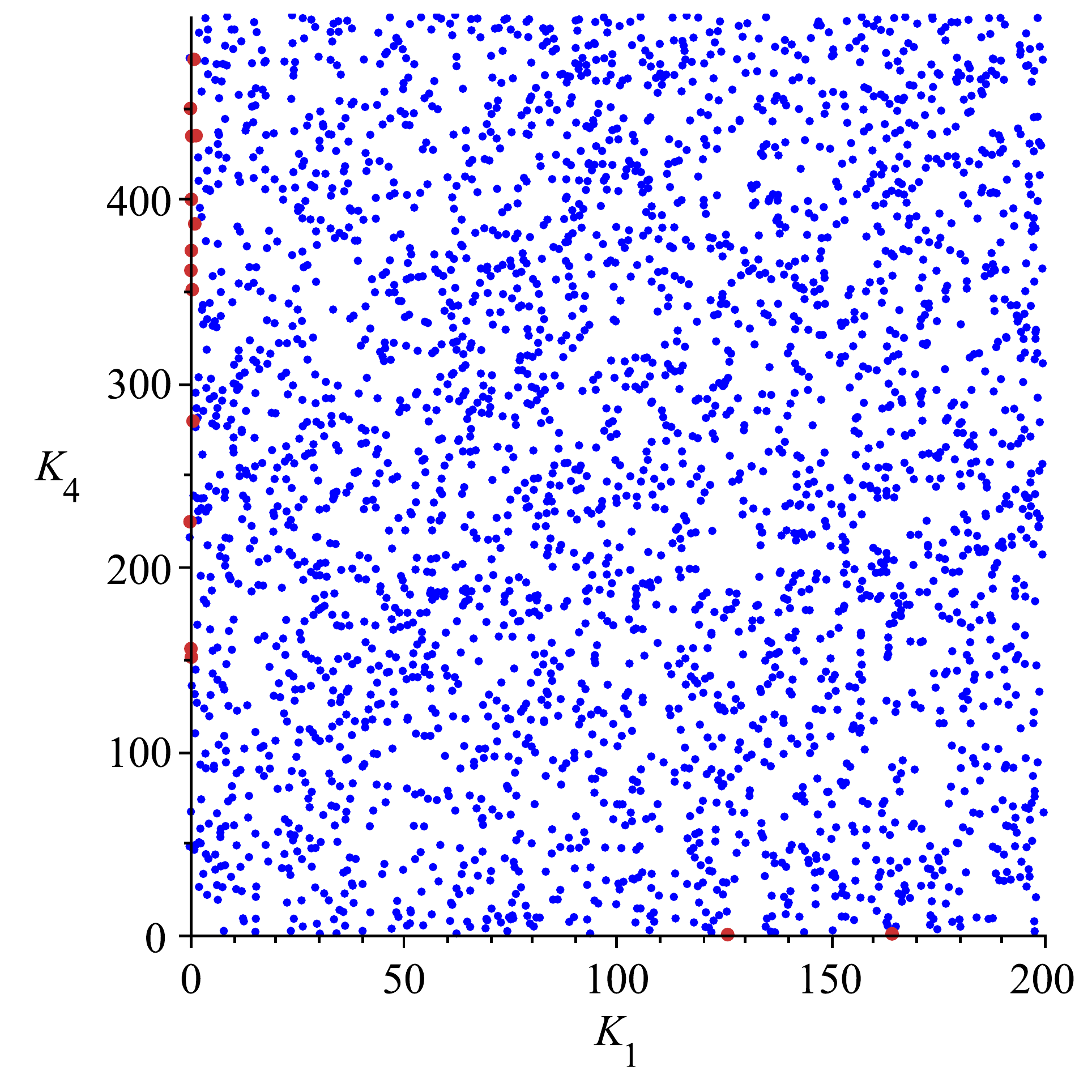}};
\node[draw=black,fill=white] (a) at (11,0) {
\includegraphics[scale=0.25]{./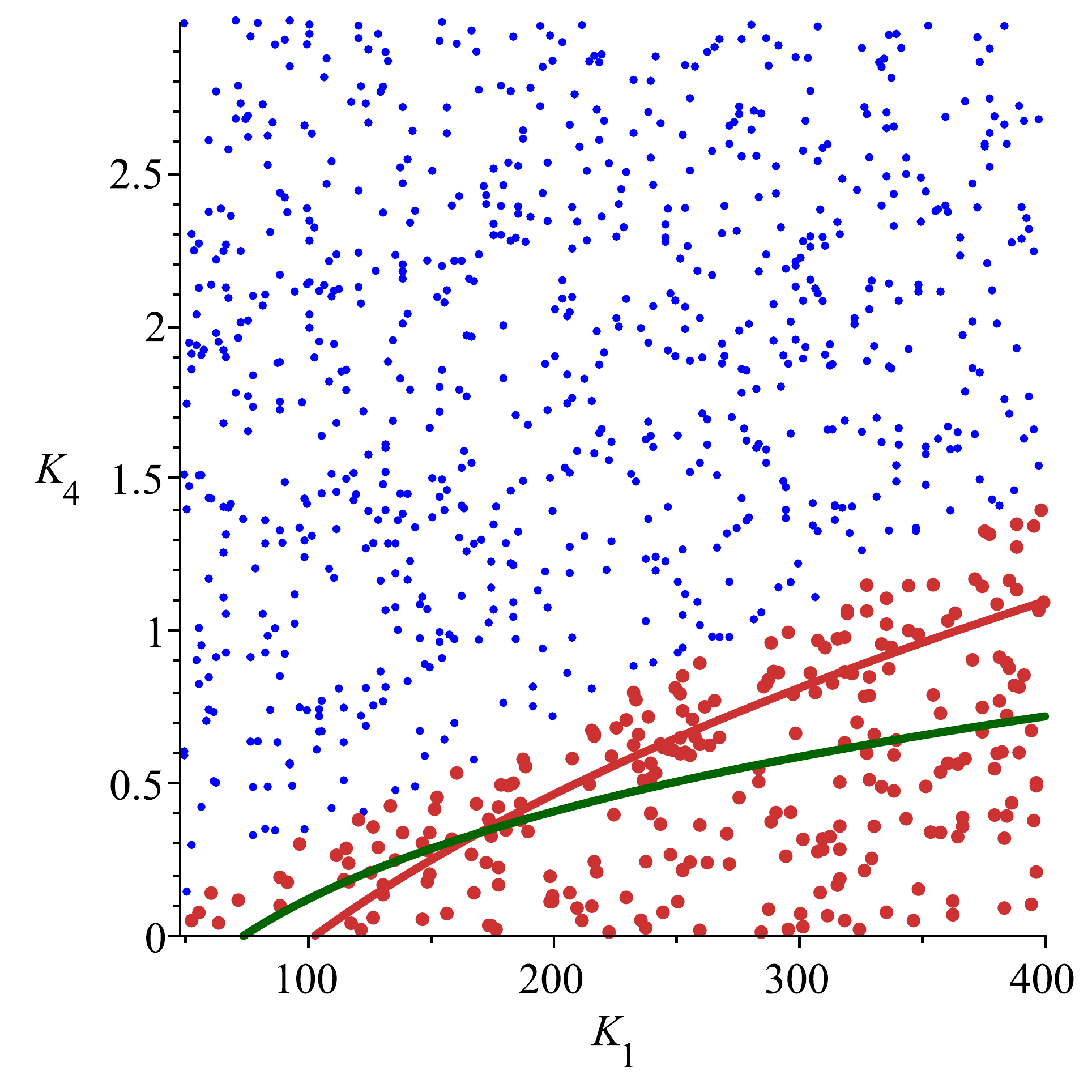}};
\end{tikzpicture}

\caption{{\small We consider $(K_2,K_3, \kk_3,\kk_6,\kk_9,\kk_{12})=(1,1,2,1,1,1)$. A blue point indicates the choice of $K_1,K_4$ does not enable multistationarity, while an orange point indicates it does. The middle figure gives an overview, and the two side figures show a close up around the $K_1$- and $K_4$-axes. The orange line is the boundary of the multistationarity region in Theorem~\ref{thm:main}(ii), and the green line is another boundary for multistationarity found using another separating hyperplane in the proof of Theorem~\ref{thm:main}(ii) (see text).}}\label{fig:sampling}

\end{figure}

 \section{Closing remarks}
In this work we have contributed to understanding the region of reaction rate constants that enable multistationarity for the model model, namely the two-site phosphorylation cycle. It is the hope that the techniques used here might be applicable to study other situations in which the sign of a multivariate polynomial on the positive orthant needs to be determined. For instance, the allosteric kinase model given in \cite{feng:allosteric} presents the same difficulties as the model model.

Furthermore, the study of the signs of a polynomial is also key when analysing stability of steady states or the presence of Hopf bifurcations (see for example \cite{torres:stability,shiu:hopf}).

\subsection*{Acknowledgements} EF   acknowledges funding from the Danish Research Council for Independent Research and thanks Carsten Wiuf for comments on the manuscript.

\small 

 \end{document}